\documentclass[copyright,creativecommons]{eptcs}

\usepackage{stmaryrd}
\usepackage{latexsym}
\usepackage{amsmath}
\usepackage{amssymb}
\usepackage{wasysym}
\usepackage{proof}
\usepackage{tikz}

\newcommand{\expr}[1]{#1{}}

\newcommand{\var}[2]{\mathit{#1}#2}
\newcommand{\abs}[3]{\lambda #1{.#2{#3}}}
\newcommand{\app}[3]{#1{~#2{#3}}}

\newcommand{\args}[3]{#1{\ldots#2{#3}}}
\newcommand{\fundef}[2]{#1 & = & \expr{#2}}

\newcommand{\cas}[6]{\begin{array}[t]{@{\hspace*{0mm}}l@{\hspace*{1mm}}l@{\hspace*{1mm}}c@{\hspace*{1mm}}l@{\hspace*{0mm}}} 
\multicolumn{4}{@{\hspace*{0mm}}l@{\hspace*{0mm}}}{{\bf case} ~ #1 ~ {\bf of}} \\
& #2{} & \rightarrow & #3{} \\
~~~ | & #4{} & \rightarrow & #5{#6}\end{array}}
\newcommand{\longcas}[7]{\begin{array}[t]{@{\hspace*{0mm}}l@{\hspace*{1mm}}l@{\hspace*{1mm}}c@{\hspace*{1mm}}l@{\hspace*{0mm}}} 
\multicolumn{4}{@{\hspace*{0mm}}l@{\hspace*{0mm}}}{{\bf case} ~ #1 ~ {\bf of}} \\
& TT & \rightarrow & #2{} \\
~~~ | &  FF & \rightarrow & #3{} \\
~~~ | &  And(e,e') & \rightarrow & #4{} \\
~~~ | &  Or(e,e') & \rightarrow & #5{} \\
~~~ | & Ref(v) & \rightarrow & #6{#7}\end{array}}

\newcommand{\casedots}[6]{{\bf case}~#1{~{\bf of}~#2{\Rightarrow #3{~|\cdots|~#4{\Rightarrow #5{#6}}}}}}
\newcommand{\where}[3]{\!\!\!\begin{array}[t]{@{\hspace*{0mm}}l@{\hspace*{1mm}}c@{\hspace*{1mm}}l@{\hspace*{0mm}}}\multicolumn{3}{@{\hspace*{0mm}}l@{\hspace*{0mm}}}{#1{}}\\\multicolumn{3}{@{\hspace*{0mm}}l@{\hspace*{0mm}}}{{\bf where}}\\#2{#3}\end{array}}

\newcommand{\wheredots}[6]{#1{} ~ {\bf where} ~ #2 = #3, \ldots, #4 = #5{#6}}
\newcommand{\Letexp}[4]{{\bf let} ~ #1{ =  #2{~ {\bf in} ~ #3{#4}}}}

\newcommand{\brackets}[2]{(#1{)#2}}
\newcommand{\ltsprog}[2]{{\mathcal L}_p[\![#1{]\!]#2}}
\newcommand{\ltsexp}[4]{{\mathcal L}_e[\![#1{]\!]~#2{~#3{#4}}}}

\newcommand{\ignore}[1]{}

\newenvironment{proof}{{\em Proof}.~}{\hfill $\Box$\\}

\newtheorem{theorem}{Theorem}[section]

\newtheorem{lemma}[theorem]{Lemma}
\newtheorem{definition}[theorem]{Definition}  
\newtheorem{example}{Example}

\title{Distilling Programs to Prove Termination}

\author{G.W. Hamilton
\institute{School of Computing\\
Dublin City University\\
Ireland}
\email{hamilton@computing.dcu.ie}}

\begin{document}

\maketitle

\begin{abstract}
The problem of determining whether or not any program terminates was shown to be undecidable by Turing, but recent advances in the area have allowed this 
information to be determined for a large class of programs. The classic method for deciding whether a program terminates dates back to Turing himself and 
involves finding a {\em ranking function} that maps a program state to a well-order, and then proving that the result of this function decreases for every possible program transition. 
More recent approaches to proving termination have involved moving away from the search for a single ranking function and toward a search for a set of ranking functions; this set 
is a choice of ranking functions and a disjunctive termination argument is used. In this paper, we describe a new technique for determining whether programs terminate. Our
technique is applied to the output of the {\em distillation} program transformation that converts programs into a simplified form called {\em distilled form}. Programs in distilled form are 
converted into a corresponding {\em labelled transition system} and termination can be demonstrated by showing that all possible infinite {\em traces} through this labelled transition 
system would result in an infinite descent of well-founded data values. We demonstrate our technique on a number of examples, and compare it to previous work.
\end{abstract}

\section{Introduction}

The {\em program termination problem}, or {\em halting problem}, can be defined as follows:
using only a finite amount of time, determine whether a given program will always finish running
or could execute forever. This problem rose to prominence before the development of stored program 
computers, in the time of Hilbert's {\em Entscheidungs problem}: the challenge to formalise all of mathematics 
and use algorithmic means to determine the validity of all statements. The halting problem was famously shown
to be undecidable by Turing \cite{TURING36}.

Although it is not possible to prove program termination in all cases, there are many programs for which this can be proved. 
The classic method for doing this dates back to Turing himself \cite{TURING48} and involves finding a ranking function that maps 
a program state to a well-order, and then proving that the result of this function decreases for every possible program transition.
This has a number of useful applications, such as in program verification, where {\em partial correctness} is often proved using 
deductive methods and a separate proof of termination is given to show {\em total correctness}, as originally done by Floyd \cite{FLOYD67}.
More recent approaches to proving termination have involved moving away from the search for a single ranking function and toward a search 
for a set of ranking functions; this set is a choice of ranking functions and a disjunctive termination argument is used. 
Program termination techniques have been developed for functional programs \cite{GEISL95,LEE01,GEISL06,MANOLIOS06},
logic programs \cite{SAGIV91,CODISH97,LINDENSTRAUSS97}, term rewriting systems \cite{DERSHOWITZ87,ARTS97,STEINBACH95} 
and imperative programs \cite{FLOYD67,BRADLEY05A,BERDINE06,COOK06,ALBERT08,SPOTO10,COOK11,GEISL14}. 

In this paper, we describe a new approach to the termination analysis of functional programs that is applied to the output of the {\em distillation} 
program transformation \cite{HAMILTON07A,HAMILTON12}. Distillation converts programs into a simplified form called {\em distilled form},
and to prove that programs in this form terminate, we convert them into a corresponding {\em labelled transition system} and then show 
that all possible infinite {\em traces} through the labelled transition system would result in an infinite descent of well-founded data values. 
This proof of termination is similar to that described in \cite{BROTHERSTON08} using {\em cyclic proof} techniques. However, we are
able to prove termination for a wider class of programs. \\
\\
The language used throughout this paper is a call-by-name higher-order functional language with the following syntax. 
\begin{definition}[Language Syntax]
\normalfont{The syntax of this language is as shown in Fig. \ref{grammar}.} \hfill $\Box$
\end{definition}
\begin{figure}[htb]
\begin{center}
\begin{tabular}{@{\hspace*{0mm}}l@{\hspace*{1mm}}r@{\hspace*{1mm}}l@{\hspace*{1mm}}l@{\hspace*{0mm}}}
$\expr{\var{prog}}$ & ::= & $\expr{\wheredots{\var{e_{0}}}{\var{h_1}}{\var{e_1}}{\var{h_k}}{\var{e_k}}}$ & Program \\
\\
$\expr{\var{e}}$ & ::= & $\expr{\var{x}}$ & Variable \\
& $|$ & $\expr{\app{\var{c}}{\args{\var{e_1}}{\var{e_n}}}}$ & Constructor Application \\
& $|$ & $\expr{\abs{\var{x}}{\var{e}}}$ & $\lambda$-Abstraction \\
& $|$ & $\expr{\var{f}}$ & Function Call \\
& $|$ & $\expr{\app{\var{e_0}}{\var{e_1}}}$ & Application \\
& $|$ & $\expr{\casedots{\var{e_0}}{\var{p_1}}{\var{e_1}}{\var{p_n}}{\var{e_n}}}$ & Case Expression \\ 
& $|$ & $\expr{\Letexp{\var{x}}{\var{e_0}}{\var{e_1}}}$ & Let Expression \\
\\
$\expr{\var{h}}$ & ::= & $\expr{\app{\var{f}}{\args{\var{x_1}}{\var{x_n}}}}$ & Function Header \\
\\
$\expr{\var{p}}$ & ::= & $\expr{\app{\var{c}}{\args{\var{x_1}}{\var{x_n}}}}$ & Pattern
\end{tabular}
\end{center}
\caption{Language Syntax}
\label{grammar}
\end{figure} 
Programs in the language consist of an expression to evaluate and a set of function definitions.
An expression can be a variable, constructor application, $\lambda$-abstraction, function call, application, {\bf case} or {\bf let}. 
Variables introduced by function headers, $\lambda$-abstractions, {\bf case} patterns and {\bf let}s are {\em bound}; all other variables are {\em free}. 
An expression that contains no free variables is said to be {\em closed}. We write $e \equiv e'$ if $e$ and $e'$ differ only in the names of bound variables.

Each constructor has a fixed arity; for example $\expr{\var{Nil}}$ has arity 0 and $\expr{\var{Cons}}$ has arity 2. 
In an expression $\expr{\app{\var{c}}{\args{\var{e_{1}}}{\var{e_{n}}}}}$, $n$ must equal the arity of $c$. 
The patterns in {\bf case} expressions may not be nested.  No variable may appear more than once within a pattern. 
We assume that the patterns in a {\bf case} expression are non-overlapping and exhaustive.
It is also assumed that erroneous terms such as $\expr{\app{\brackets{\app{\var{c}}{\args{\var{e_1}}{\var{e_n}}}}}{\var{e}}}$ 
where $c$ is of arity $n$ and $\expr{\casedots{\brackets{\abs{x}{e}}}{\var{p_1}}{\var{e_1}}{\var{p_n}}{\var{e_n}}}$ cannot occur.

\begin{example}
\normalfont{Consider the program from \cite {BRADLEY05B} shown in Figure \ref{example} for calculating the greatest common divisor of two numbers $x$ and $y$.
\begin{figure}[htb]
\begin{center}
\begin{tabular}{l}
$\expr{\where{\app{\app{\var{gcd}}{\var{x}}}{\var{y}}}
{\fundef{\app{\app{\var{gcd}}{\var{x}}}{\var{y}}}{\cas{\brackets{\app{\app{\var{gt}}{\var{x}}}{\var{y}}}}{\var{True}}{\app{\app{\var{gcd}}{\brackets{\app{\app{\var{sub}}{\var{x}}}{\var{y}}}}}{\var{y}}}{\var{False}}{\cas{\brackets{\app{\app{\var{gt}}{\var{y}}}{\var{x}}}}{\var{True}}{\app{\app{\var{gcd}}{\var{x}}}{\brackets{\app{\app{\var{sub}}{\var{y}}}{\var{x}}}}}{\var{False}}{\var{x}}}} \\
\fundef{\app{\app{\var{gt}}{\var{x}}}{\var{y}}}{\cas{\var{x}}{\var{Zero}}{\var{False}}{\app{\var{Succ}}{\var{x}}}{\cas{\var{y}}{\var{Zero}}{\var{True}}{\app{\var{Succ}}{\var{y}}}{\app{\app{\var{gt}}{\var{x}}}{\var{y}}}}} \\
\fundef{\app{\app{\var{sub}}{\var{x}}}{\var{y}}}{\cas{\var{y}}{\var{Zero}}{\var{x}}{\app{\var{Succ}}{\var{y}}}{\cas{\var{x}}{\var{Zero}}{\var{Zero}}{\app{\var{Succ}}{\var{x}}}{\app{\app{\var{sub}}{\var{x}}}{\var{y}}}}}
}}$
\end{tabular}
\end{center}
\caption{Example Program}
\label{example}
\end{figure}
Proving the termination of this program is tricky as there is no clear continued decrease in the size of either of the parameters of the {\em gcd} function (even though a number is subtracted from one of the arguments in each recursive call, it is difficult to determine that the number subtracted must be non-zero). 
We show how the termination of this program can be proved using our approach.}
\end{example}
The remainder of this paper is structured as follows. In Section 2, we give some preliminary definitions that are used throughout the paper.  In Section 3, we define the {\em labelled transition systems}  that are used in our termination proofs. In Section 4, we show how to prove termination of programs using our technique, and apply this technique to the program in Figure \ref{example}. In Section 5, we give some examples of programs that cause difficulties in termination analysis using other techniques, but are shown to terminate using our technique.
Section 6 concludes and considers related work.

\section{Preliminaries}

In this section, we complete the presentation of our programming language and give a brief overview of the distillation program transformation algorithm. 

\subsection{Language Definition}

\begin{definition}[Substitution]
\normalfont{We use the notation $\theta = \{x_1 \mapsto e_1, \ldots, x_n \mapsto e_n\}$ to denote a {\em substitution}.
If $e$ is an expression, then $e\theta = e\{x_1 \mapsto e_1, \ldots, x_n \mapsto e_n\}$ is the result of simultaneously 
substituting the expressions $e_1,\ldots, e_n$ for the corresponding variables $x_1,\ldots,x_n$, respectively, 
in the expression $e$ while ensuring that bound variables are renamed appropriately to avoid name capture.} \hfill $\Box$
\end{definition}
\begin{definition}[Language Semantics]
\normalfont{The call-by-name operational semantics of our language is standard: we define an evaluation relation $\Downarrow$ between 
closed expressions and {\em values}, where values are expressions in {\em weak head normal form} (i.e. constructor applications or $\lambda$-abstractions). 
We define a one-step reduction relation $\overset{r}{\leadsto}$ inductively as shown in Fig. \ref{reduction}, where the reduction $r$ can be $f$ (unfolding of function $f$), 
$c$ (elimination of constructor $c$) or $\beta$ ($\beta$-substitution).} \hfill $\Box$
\begin{figure}[htb]
\begin{center}
\begin{tabular}{c@{\hspace*{1cm}}c@{\hspace*{1cm}}c}
$((\lambda x.e_0)~e_1) \overset{\beta}{\leadsto} (e_0\{x \mapsto e_1\})$ & $\infer{f \overset{f}{\leadsto} \lambda x_1 \ldots x_n.e}{f~x_1 \ldots x_n=e}$ & $\infer{(e_0~e_1) \overset{r}{\leadsto} (e_0'~e_1)}{e_0 \overset{r}{\leadsto} e_0'}$  \\
\\
\multicolumn{3}{c}{$\infer{(\mathbf{case}~(c~e_1 \ldots e_n)~\mathbf{of}~p_1:e_1' | \ldots | p_k:e_k') \overset{c}{\leadsto} (e_i\{x_1 \mapsto e_1,\ldots,x_n \mapsto e_n\})}{p_i=c~x_1 \ldots x_n}$} \\
\\
\multicolumn{3}{c}{$\infer{(\mathbf{case}~e_0~\mathbf{of}~p_1:e_1 | \ldots p_k:e_k) \overset{r}{\leadsto} (\mathbf{case}~e_0'~\mathbf{of}~p_1:e_1 | \ldots p_k:e_k)}{e_0 \overset{r}{\leadsto} e_0'}$} \\
\\
\multicolumn{3}{c}{$(\mathbf{let}~x=e_0~\mathbf{in}~e_1) \overset{\beta}{\leadsto} (e_1\{x \mapsto e_0\})$}
\end{tabular} 
\end{center}
\caption{One-Step Reduction Relation}
\label{reduction}
\end{figure} 
\end{definition} 
\begin{definition}[Context]
\normalfont{A context $C$ is an expression with a ``hole" [] in the place of one sub-expression. $C[e]$ is the expression 
obtained by replacing the hole in context $C$ with the expression $e$. The free variables within $e$ may become bound within
$C[e]$; if $C[e]$ is closed then we call it a {\em closing context} for $e$.}
\end{definition}
We use the notation $e\overset{r}{\leadsto}$ if the expression $e$ reduces, $e\Uparrow$ if $e$ diverges, $e\Downarrow$ if $e$ converges and 
$e\Downarrow v$ if $e$ evaluates to the value $v$. These are defined as follows, where $\overset{r}{\leadsto}^*$ denotes the reflexive transitive closure of $\overset{r}{\leadsto}$:
\begin{center}
\begin{tabular}{l@{\hspace{2cm}}l}
$e \overset{r}{\leadsto}$, iff $\exists e'.e \overset{r}{\leadsto} e'$ & $e \Downarrow$, iff $\exists v.e \Downarrow v$ \\
$e \Downarrow v$, iff $e \overset{r}{\leadsto}^* v \wedge \neg(v\overset{r}{\leadsto})$ & $e \Uparrow$, iff $\forall e'.e \overset{r}{\leadsto}^* \!e' \Rightarrow e'\overset{r}{\ \leadsto\ }$
\end{tabular}
\end{center}
\begin{definition}[Contextual Equivalence]
\normalfont{Contextual equivalence, denoted by $\simeq$, equates two expressions if and only if they exhibit the same termination behaviour in all closing contexts i.e. 
$e_1 \simeq e_2$ iff $\forall C\ .\ C[e_1] \Downarrow$ iff $C[e_2] \Downarrow.$}
\end{definition}

\subsection{Distillation}

Distillation \cite{HAMILTON07A,HAMILTON12} is a powerful program transformation technique that builds on top of the positive supercompilation transformation algorithm \cite{TURCHIN86,SORENSEN96}. 
The following theorems have previously been proved about the distillation transformation ${\cal D}$.
\begin{theorem}[Correctness of Transformation]
$\forall p \in Prog: {\cal D}[\![p]\!] \simeq p$
\end{theorem}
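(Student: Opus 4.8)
The plan is to prove that distillation preserves contextual equivalence by decomposing ${\cal D}$ into its constituent local transformation steps and showing that each such step preserves the convergence behaviour that $\simeq$ observes. Since ${\cal D}$ is built on top of positive supercompilation, the transformation of a program proceeds by repeatedly applying three kinds of step: \emph{driving} (unfolding of function calls, elimination of constructors in $\mathbf{case}$ expressions, and $\beta$-reduction), \emph{generalization} (extracting a subexpression into a $\mathbf{let}$), and \emph{folding} (introducing a recursive function when the current expression is recognised as a renaming of an ancestor in the process tree). Overall correctness then reduces to the correctness of each of these steps together with the compositionality of the whole construction.

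First I would establish that $\simeq$ is a congruence for the language of Figure~\ref{grammar}, so that $e \simeq e'$ implies $C[e] \simeq C[e']$ for every context $C$. This follows from the compositional, context-closed shape of the one-step reduction relation in Figure~\ref{reduction}, and it is what lets the transformation be verified piecewise on subexpressions rather than on whole programs at once. Next I would dispatch the driving and generalization steps, which I expect to be routine. Each driving step is literally an instance of $\overset{r}{\leadsto}$; since convergence is reduction to a normal form and the reduction relation is deterministic, a single step $e \overset{r}{\leadsto} e'$ yields $e \Downarrow$ iff $e' \Downarrow$, so a finite number of driving steps preserves convergence, and by the congruence property it does so inside every closing context. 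Generalization replaces $e_1\{x \mapsto e_0\}$ by $\mathbf{let}~x = e_0~\mathbf{in}~e_1$, and the final rule of Figure~\ref{reduction} shows these reduce to one another, so in this call-by-name setting generalization is semantics-preserving as well.

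The hard part is the folding step, where a least fixpoint is introduced. Folding replaces an expression by a call to a freshly defined function whose body is assembled from the driven computation lying between an ancestor node and the current node in the process tree; the danger is that introducing recursion can create spurious non-termination or conceal a genuine divergence, so that $C[{\cal D}[\![p]\!]] \Downarrow$ and $C[p] \Downarrow$ might disagree. The cleanest way to discharge this is an improvement-theoretic argument in the style of Sands: one shows that every fold is \emph{guarded}, i.e. separated from its defining ancestor by at least one genuine driving step, so that the introduced recursive definition is productive and each of its unfoldings agrees step-for-step with a corresponding sequence of reductions of the original expression. A fixpoint-induction (equivalently, a bisimulation) argument over this correspondence then shows that the folded program and the original expression have identical convergence behaviour in all closing contexts, which combined with the congruence property and the earlier two cases yields ${\cal D}[\![p]\!] \simeq p$. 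I expect this guardedness-plus-improvement argument to carry the entire weight of the proof; the termination of the transformation itself, needed for ${\cal D}[\![p]\!]$ to be defined in the first place, is a separate matter that I would treat independently.
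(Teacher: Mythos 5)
You should first be aware that this paper contains no proof of this theorem to compare against: it is explicitly imported (``The following theorems have previously been proved about the distillation transformation'') from \cite{HAMILTON07A,HAMILTON12}, and the present paper treats $\mathcal{D}$ as a black box. Measured against the proof in those cited works, your proposal is a genuinely different route. The published correctness argument for distillation is coinductive and semantic: programs (source and residual) are represented as labelled transition systems, and the transformation is shown to preserve a (weak) bisimulation between them; since bisimilarity implies contextual equivalence, $\mathcal{D}[\![p]\!] \simeq p$ follows, with folds against renamed ancestors handled uniformly by the coinduction rather than by any cost accounting. Your decomposition into driving, generalization and folding, with folds discharged by Sands-style improvement, is instead the methodology standard in correctness proofs of positive supercompilation. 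Each route buys something: the bisimulation proof meshes with the LTS machinery this paper reuses for termination analysis and needs no cost metric; the improvement route is more modular (each step verified locally against the reduction semantics of Figure~\ref{reduction}) and yields the stronger conclusion that the residual program is also an improvement, not merely equivalent. Your preliminary steps (congruence of $\simeq$, soundness of driving via determinacy of $\overset{r}{\leadsto}$, and $\mathbf{let}$-introduction being neutral under call-by-name) are all sound and would be needed on either route.

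There is, however, one concrete soft spot in your treatment of the step you yourself identify as weight-bearing. Guardedness in your sense --- ``separated from its defining ancestor by at least one genuine driving step'' --- is not the side condition that Sands' improvement theorem requires. In call-by-name improvement theory, $\beta$-steps and constructor-elimination steps are cost-free; only function unfoldings are ticked. A fold--ancestor cycle containing only $\beta$ and $\mathbf{case}$ reductions would satisfy your guardedness condition while the fixpoint induction fails, and mere step-by-step \emph{equivalence} preservation around a fold is famously insufficient for unfold/fold correctness. So you must state the condition as: every cycle closed by a fold contains at least one function-unfolding tick, and the body assembled between ancestor and fold must be shown to be an \emph{improvement} of the ancestor's unfolding, not just contextually equivalent to it. Distillation's construction does guarantee the tick (folding targets are function-call nodes, which are driven by unfolding), so the gap is reparable, but as written your sketch proves less than it claims at exactly the point where the danger lies. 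You are right that termination of the process-tree construction itself (Theorem~\ref{distilled} in the paper) is a separate obligation and correctly set it aside.
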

Thus, the resulting program will have the same termination properties as the original program in all contexts.
\begin{theorem}[On The Form of Expressions Produced by Distillation]
\label{distilled}
\normalfont{For all possible input programs, distillation terminates and the form of expressions it produces (after all function arguments that are not variables are extracted using {\bf let}s), which we call {\em distilled form}, is described by $e^\emptyset$ where $e^{\rho}$ is defined as follows:
\begin{center}
\begin{tabular}{lrl}
$\expr{\var{e^{\rho}}}$ & ::= & $\expr{\app{\var{x}}{\args{\var{e_1^{\rho}}}{\var{e_n^{\rho}}}}}$ \\
& $|$ & $\expr{\app{\var{c}}{\args{\var{e_1^{\rho}}}{\var{e_n^{\rho}}}}}$ \\
& $|$ & $\expr{\abs{\var{x}}{\var{e^{\rho}}}}$ \\
& $|$ & $\expr{\app{\var{f}}{\args{\var{x_1}}{\var{x_n}}}}$ (where $f$ is defined by $f~x_1 \ldots x_n = e^{\rho}$) \\
& $|$ & $\expr{\casedots{\brackets{\app{\var{x}}{\args{\var{e_1^{\rho}}}{\var{e_n^{\rho}}}}}}{\var{p_1}}{\var{e_{{n+1}}^{\rho}}}{\var{p_k}}{\var{e_{n+k}^{\rho}}}}$ $(x \notin \rho)$ \\ 
& $|$ & $\expr{\Letexp{\var{x}}{\var{e_0^{\rho}}}{\var{e_1^{(\rho \cup \{x\})}}}}$
\end{tabular} 
\end{center}}
\end{theorem}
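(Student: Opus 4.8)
The plan is to split the statement into its two assertions --- that distillation always terminates, and that its output conforms to the grammar for $e^\emptyset$ --- and to treat them by quite different means. For termination, I would argue that the process tree (or graph) constructed by ${\cal D}$ during driving cannot grow infinitely. The standard device here is a \emph{whistle} based on the homeomorphic embedding relation $\embedding$ on configurations: by Kruskal's tree theorem, $\embedding$ is a well-quasi-order on expressions over the finite signature of constructors and function symbols, so any infinite branch of the tree must eventually contain two configurations $e_i \mathrel{\embedding} e_j$ with $i < j$. At that point the algorithm does not continue driving but instead generalises, replacing the offending subtree by a more general configuration. I would then exhibit a well-founded measure on configurations (for instance, the multiset of function-call redexes taken modulo the embedding, or the number of distinct let-free function-call patterns) that strictly decreases at each generalisation step, so that generalisation cannot be applied infinitely often either. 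Combining the well-quasi-order on driving with the well-founded measure on generalisation shows that every branch is finite; since the tree is finitely branching, K\"onig's lemma then yields a finite process tree, and folding this back into a finite set of recursive function definitions completes the termination argument.

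For the second assertion I would proceed by structural induction over the rules of ${\cal D}$, showing that each transformation step produces an expression matching exactly one production of $e^\rho$, with $\rho$ maintained as the invariant set of \textbf{let}-bound variables currently in scope. Driving a variable application, constructor application, or $\lambda$-abstraction yields the first three productions directly, with the subexpressions handled inductively; folding a repeated configuration onto a fresh recursive function $f$ yields the function-call production $f~x_1 \ldots x_n$ with variable arguments (the non-variable arguments having been extracted by the \textbf{let}s mentioned in the statement); and a \textbf{let} introduced either by the source program or by generalisation yields the final production, extending $\rho$ by the newly bound variable in the body. The subtle production is the \textbf{case}, which requires the selector to be a variable application $x~e_1 \ldots e_n$ with $x \notin \rho$: here the crucial invariant is that a \textbf{case} can only survive driving when its selector is \emph{stuck}, i.e.\ headed by a variable that is not bound by any enclosing \textbf{let}. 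If the head $x$ were in $\rho$ it would be bound to some $e_0$, and driving would substitute $e_0$ for $x$ and continue reducing the selector, so no such residual \textbf{case} can remain; this is precisely the content of the side condition $x \notin \rho$.

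I expect the termination argument to be the main obstacle. Establishing that $\embedding$ is a well-quasi-order over the relevant configurations, and --- more delicately --- that generalisation strictly decreases a well-founded measure without interfering with the embedding test, is where the real work lies. Distillation drives not merely expressions but process graphs whose recursive structure complicates both the embedding relation and the termination measure, so care is needed to choose a measure that is simultaneously respected by driving and by generalisation. By contrast, the grammar characterisation is essentially a bookkeeping exercise once the algorithm's rules are laid out, the only genuinely instructive point being the $\rho$ invariant governing which \textbf{case} selectors can persist.
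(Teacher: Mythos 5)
The first thing to say is that the paper contains no proof of this theorem for your attempt to be matched against: it is introduced with the sentence ``The following theorems have previously been proved about the distillation transformation,'' i.e.\ it is imported wholesale from the cited distillation papers \cite{HAMILTON07A,HAMILTON12}, and the paper explicitly declines even to define the algorithm (``we can simply treat this as a black box''). So your proposal can only be judged against that cited literature, and in outline it does follow its standard shape: a whistle based on the homeomorphic embedding, generalisation governed by a well-founded measure, K\"onig's lemma for finiteness of the process tree, and a rule-by-rule induction for the residual grammar. Two caveats on the termination half: Kruskal's theorem gives a well-quasi-order on first-order terms over a finite signature, so for this higher-order language with an unbounded supply of bound variables you must say how variables are collapsed (otherwise the embedding is not a wqo), and distillation's whistle compares labelled transition systems rather than expressions --- a complication you flag as ``where the real work lies'' but do not actually resolve, which is fair in a sketch but is precisely the part that cannot be reconstructed without the algorithm's definition.

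The genuine flaw is in your justification of the side condition $x \notin \rho$ on \textbf{case} selectors. You argue that if the selector head $x$ were let-bound, driving would substitute its binding $e_0$ for $x$ and continue reducing, so no residual \textbf{case} on $x$ could survive. But the \textbf{let}s in distilled output arise from generalisation, and the entire point of generalisation is that the extracted subterm is \emph{not} substituted back during subsequent driving --- re-substituting it would undo the generalisation and destroy the very termination argument you gave in your first part. Indeed, if let-bound variables were always eliminated by substitution, no residual \textbf{let} could appear in the output at all, contradicting the final production of the grammar you are trying to establish. The side condition actually encodes a design property of distillation's generalisation step: the subterms it extracts are exactly those the residual computation never scrutinizes, which is the ``no intermediate data structures'' property the paper emphasises immediately after the theorem (and which the proof of Lemma \ref{decreaselemma} relies on). Establishing that property requires the concrete definition of how generalisation selects what to extract at the LTS level --- material found only in \cite{HAMILTON07A,HAMILTON12} --- so as it stands your argument for the \textbf{case} production is not merely incomplete but inconsistent with your own treatment of generalisation.
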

The particular property of expressions in distilled form that makes them easier to analyse for termination is that no sub-expression that has been extracted using a {\bf let} expression
can be an intermediate data structure; {\bf let} variables are added to the set $\rho$, and cannot be used in the selectors of {\bf case} expressions. 
This means that once a parameter has increased in size it cannot subsequently decrease, which makes it much easier to identify parameters that decrease in size.

Due to space considerations, we are not able to include a full definition of the distillation algorithm here. However, we can simply treat this as a black box that does not alter the 
termination properties of a program and will always convert it into distilled form, so this paper is still reasonably self-contained.

\begin{example}
\normalfont{The result of transforming the example program in Figure \ref{example} is shown in Figure \ref{exampletrans}. We can see that this program is indeed in distilled form.}
\end{example}
\begin{figure}[htb]
\begin{center}
\begin{tabular}{l}
$\expr{\where{\app{\app{\app{\app{\var{f0}}{\var{x}}}{\var{y}}}{\var{x}}}{\var{y}}}{
\fundef{\app{\app{\app{\app{\var{f0}}{\var{a}}}{\var{b}}}{\var{c}}}{\var{d}}}{\cas{\var{a}}{\var{Zero}}{\cas{\var{b}}{\var{Zero}}{\var{c}}{\app{\var{Succ}}{\var{b}}}{\app{\app{\app{\app{\var{f1}}{\var{c}}}{\var{b}}}{\var{c}}}{\var{b}}}}{\app{\var{Succ}}{\var{a}}}{\cas{\var{b}}{\var{Zero}}{\app{\app{\app{\app{\var{f3}}{\var{a}}}{\var{d}}}{\var{a}}}{\var{d}}}{\app{\var{Succ}}{\var{b}}}{\app{\app{\app{\app{\var{f0}}{\var{a}}}{\var{b}}}{\var{c}}}{\var{d}}}}} \\
\fundef{\app{\app{\app{\app{\var{f1}}{\var{a}}}{\var{b}}}{\var{c}}}{\var{d}}}{\cas{\var{a}}{\var{Zero}}{\app{\app{\app{\app{\var{f0}}{\var{c}}}{\var{b}}}{\var{c}}}{\var{b}}}{\app{\var{Succ}}{\var{a}}}{\app{\app{\app{\app{\var{f2}}{\var{a}}}{\var{b}}}{\var{c}}}{\var{d}}}} \\
\fundef{\app{\app{\app{\app{\var{f2}}{\var{a}}}{\var{b}}}{\var{c}}}{\var{d}}}{\cas{\var{a}}{\var{Zero}}{\cas{\var{b}}{\var{Zero}}{\var{c}}{\app{\var{Succ}}{\var{b}}}{\app{\app{\app{\app{\var{f1}}{\var{c}}}{\var{b}}}{\var{c}}}{\var{b}}}}{\app{\var{Succ}}{\var{a}}}{\cas{\var{b}}{\var{Zero}}{\app{\app{\app{\app{\var{f5}}{\var{a}}}{\var{d}}}{\var{a}}}{\var{d}}}{\app{\var{Succ}}{\var{b}}}{\app{\app{\app{\app{\var{f2}}{\var{a}}}{\var{b}}}{\var{c}}}{\var{d}}}}} \\
\fundef{\app{\app{\app{\app{\var{f3}}{\var{a}}}{\var{b}}}{\var{c}}}{\var{d}}}{\cas{\var{b}}{\var{Zero}}{\app{\app{\app{\app{\var{f3}}{\var{a}}}{\var{d}}}{\var{a}}}{\var{d}}}{\app{\var{Succ}}{\var{b}}}{\app{\app{\app{\app{\var{f4}}{\var{a}}}{\var{b}}}{\var{c}}}{\var{d}}}} \\
\fundef{\app{\app{\app{\app{\var{f4}}{\var{a}}}{\var{b}}}{\var{c}}}{\var{d}}}{\cas{\var{a}}{\var{Zero}}{\cas{\var{b}}{\var{Zero}}{\app{\var{Succ}}{\var{c}}}{\app{\var{Succ}}{\var{b}}}{\app{\app{\app{\app{\var{f5}}{\var{c}}}{\var{b}}}{\var{c}}}{\var{b}}}}{\app{\var{Succ}}{\var{a}}}{\cas{\var{b}}{\var{Zero}}{\app{\app{\app{\app{\var{f3}}{\var{a}}}{\var{d}}}{\var{a}}}{\var{d}}}{\app{\var{Succ}}{\var{b}}}{\app{\app{\app{\app{\var{f4}}{\var{a}}}{\var{b}}}{\var{c}}}{\var{d}}}}} \\
\fundef{\app{\app{\app{\app{\var{f5}}{\var{a}}}{\var{b}}}{\var{c}}}{\var{d}}}{\cas{\var{a}}{\var{Zero}}{\cas{\var{b}}{\var{Zero}}{\app{\var{Succ}}{\var{c}}}{\app{\var{Succ}}{\var{b}}}{\app{\app{\app{\app{\var{f5}}{\var{c}}}{\var{b}}}{\var{c}}}{\var{b}}}}{\app{\var{Succ}}{\var{a}}}{\cas{\var{b}}{\var{Zero}}{\app{\app{\app{\app{\var{f5}}{\var{a}}}{\var{d}}}{\var{a}}}{\var{d}}}{\app{\var{Succ}}{\var{b}}}{\app{\app{\app{\app{\var{f5}}{\var{a}}}{\var{b}}}{\var{c}}}{\var{d}}}}}}}$
\end{tabular}
\end{center}
\caption{Example Program Distilled}
\label{exampletrans}
\end{figure}

\section{Labelled Transition Systems}

In this section, we define the labelled transition systems used in our termination analysis.
\begin{definition}[Labelled Transition System]
\normalfont{A {\em labelled transition system} (LTS) is a 4-tuple \\ $({\cal E},e_0,Act,\to)$ where:}
\end{definition}
\begin{itemize}
\item ${\cal E}$ is a set of {\em states} of the LTS. Each is an expression or the end-of-action state {\bf 0}.
\item $e_0 \in {\cal E}$ is the {\em start state}.
\item $Act$ is a set of {\em actions} which can be one of the following:
\begin{itemize}
\item $x$, a variable; 
\item $c$, a constructor;
\item $\lambda x$, a $\lambda$-abstraction; 
\item $f$, a function unfolding;
\item $@$, the function in an application;
\item $\# i$, the $i^{th}$ argument in an application; 
\item $\mathbf{case}$, a case selector;
\item $p$, a case pattern;
\item $\mathbf{let}~x$, a let variable
\item $\mathbf{in}$, a let body.
\end{itemize}
\item $\to \ \subseteq {\cal E} \times Act \times {\cal E}$ is a {\em transition relation}. We write $e \xrightarrow{\alpha} e'$ for a 
transition from state $e$ to state $e'$ via action $\alpha$.  \hfill $\Box$
\end{itemize}
We also write $e \rightarrow (\alpha_1,t_1), \ldots, (\alpha_n,t_n)$ for a LTS with start state $e$ where $t_1 \ldots t_n$ 
are the LTSs obtained by following the transitions labelled $\alpha_1 \ldots \alpha_n$ respectively from $e$. 
\begin{definition}[Renaming]
\normalfont{We use the notation $\sigma = \{x_1 \mapsto x_1', \ldots, x_n \mapsto x_n'\}$ to denote a {\em renaming}.
If $e$ is an expression, then $e\sigma = e\{x_1 \mapsto x_1', \ldots, x_n \mapsto x_n'\}$ is the result of simultaneously replacing the 
free variables $x_1 \ldots x_n$ with the corresponding variables $x_1' \ldots x_n'$ respectively, in the expression $e$ while ensuring 
that bound variables are renamed appropriately to avoid name capture.} \hfill $\Box$
\end{definition}
\begin{definition}[Folded LTS]
\normalfont{A {\em folded LTS} is a LTS which also contains renamings of the form $e \overset{\sigma}{\longrightarrow} e'$, where $\sigma$ is a renaming s.t. $e \equiv e' \sigma$.} \hfill $\Box$
\end{definition}
We now show how to generate the LTS representation of a program. It is assumed here that all function arguments in the program are variables;
it is always possible to extract non-variable function arguments using {\bf let}s to ensure that this is the case.
\begin{definition}[Generating LTS From Program]
\label{ltstoprog}
\normalfont{A LTS can be generated from a program $p$ as $\expr{\ltsprog{\var{p}}}$ using the rules as shown in Fig. \ref{proglts}. The rules ${\cal L}_e$ 
generate a LTS from an expression where the parameter $\rho$ is the set of previously encountered function calls and the parameter $\Delta$ is the set of function definitions. 
If a renaming of a previously memoised function call is encountered, no further transitions are added to the constructed LTS. Thus, the constructed LTS will always be a finite 
representation of the program.} \hfill $\Box$
\end{definition}
\begin{figure}[htb]
\begin{center}
\begin{tabular}{@{\hspace*{0mm}}l@{\hspace*{1mm}}c@{\hspace*{1mm}}l@{\hspace*{0mm}}}
$\expr{\ltsprog{\wheredots{\var{e_0}}{\var{f_1}}{\var{e_1}}{\var{f_n}}{\var{e_n}}}}$ & = & $\expr{\ltsexp{\var{e_0}}{\rho}{(\Delta \cup \{f_1=e_1,\ldots,f_n=e_n\})}}$ \\
\\
$\expr{\ltsexp{e=\var{x}}{\rho}{\Delta}}$ & = & $e \rightarrow (x,\mathbf{0})$ \\
$\expr{\ltsexp{e=\app{\var{c}}{\args{\var{e_{1}}}{\var{e_{n}}}}}{\rho}{\Delta}}$ & = & $e \rightarrow (c,{\bf 0}),(\# 1,\expr{\ltsexp{\var{e_1}}{\rho}{\Delta}}),\ldots,(\# n,\expr{\ltsexp{\var{e_n}}{\rho}{\Delta}})$ \\
$\expr{\ltsexp{e=\abs{\var{x}}{\var{e}}}{\rho}{\Delta}}$ & = & $e \rightarrow (\lambda x,\expr{\ltsexp{\var{e}}{\rho}{\Delta}})$ \\
$\expr{\ltsexp{e=\app{\var{f}}{\args{\var{x_1}}{\var{x_n}}}}{\rho}{\Delta}}$ & = & $\left\{\begin{array}{ll}
e \xrightarrow{\sigma} e',  & $if $\exists e' \in \rho.e'\sigma \equiv e \\
e \rightarrow (f,\expr{\ltsexp{\var{e'}}{(\rho \cup \{e\})}{\Delta}}), & $otherwise $(f \equiv \lambda x_1 \ldots x_n.e' \in \Delta)
\end{array}\right.$ \\
$\expr{\ltsexp{e=\app{\var{e_{0}}}{\var{e_{1}}}}{\rho}{\Delta}}$ & = & $e \rightarrow (@,\expr{\ltsexp{\var{e_0}}{\rho}{\Delta}}),(\# 1,\expr{\ltsexp{\var{e_1}}{\rho}{\Delta}})$ \\
\multicolumn{3}{@{\hspace*{0mm}}l@{\hspace*{0mm}}}{$\expr{\ltsexp{e=\casedots{\var{e_0}}{\var{p_{1}}}{\var{e_1}}{\var{p_{n}}}{\var{e_n}}}{\rho}{\Delta}}$} \\
& = & $e \rightarrow ({\bf case},\expr{\ltsexp{\var{e_0}}{\rho}{\Delta}}),(p_1,\expr{\ltsexp{\var{e_1}}{\rho}{\Delta}}),\ldots,(p_n,\expr{\ltsexp{\var{e_n}}{\rho}{\Delta}})$ \\
$\expr{\ltsexp{e=\Letexp{\var{x}}{\var{e_0}}{\var{e_1}}}{\rho}{\Delta}}$ & = & $e \rightarrow ({\bf let}~x,\expr{\ltsexp{\var{e_0}}{\rho}{\Delta}}),({\bf in},\expr{\ltsexp{\var{e_1}}{\rho}{\Delta}})$ 
\end{tabular}
\end{center}
\caption{LTS Representation of a Program}
\label{proglts}
\end{figure}
\begin{example}
\normalfont{The LTS generated for the distilled {\em gcd} program in Figure \ref{exampletrans} is shown in Figure \ref{examplelts}.}
\begin{figure}[htbp]
$$\includegraphics[scale=1.0,trim = 25 50 50 200]{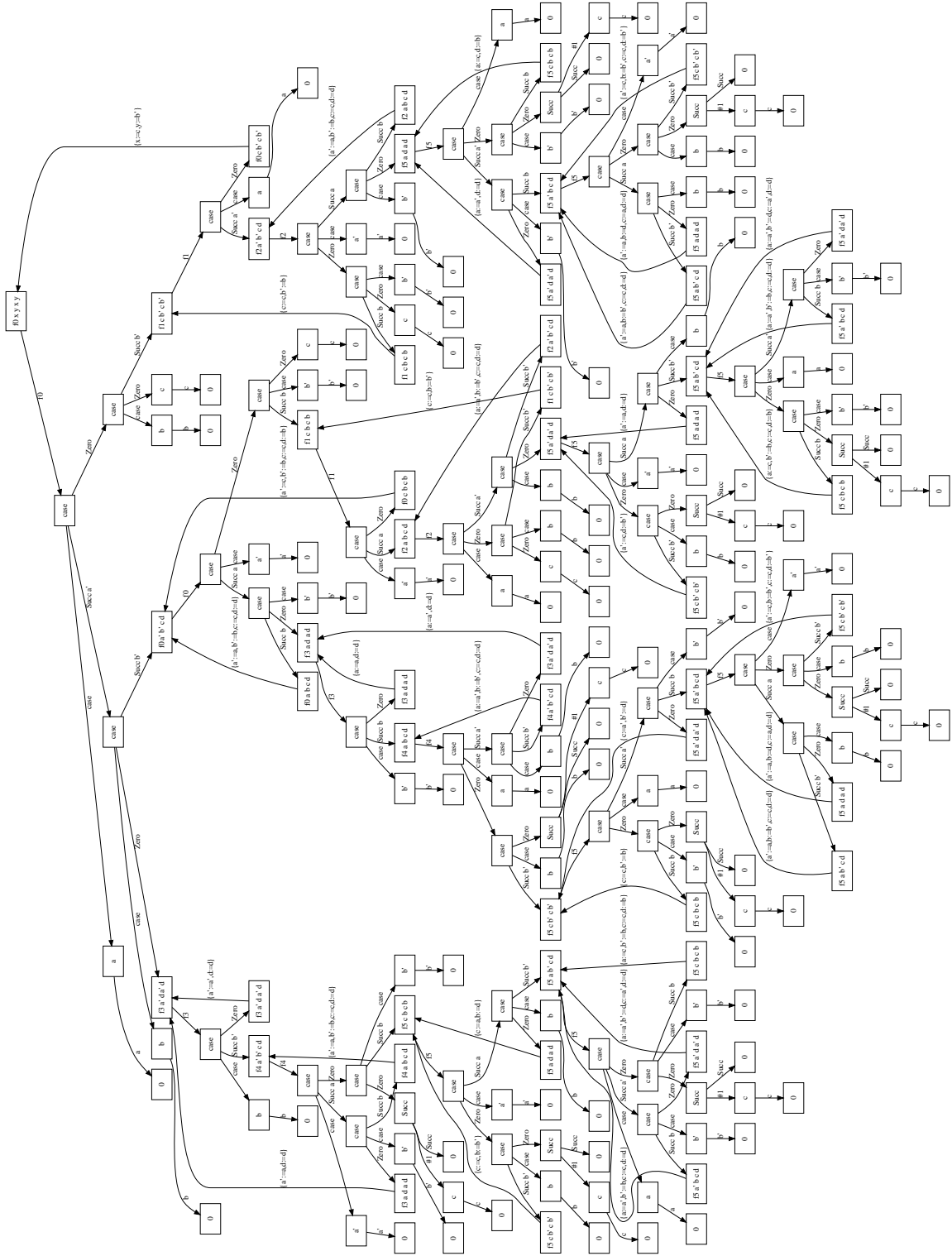}$$
\caption{LTS Representation of Example Program}
\label{examplelts}
\end{figure} 
\end{example}

\section{Proving Termination}

In order to prove that a program terminates, we analyse the labelled transition system generated from the result of transforming the program using distillation.
We need to show that within every cycle in this labelled transition system, at least one parameter is decreasing. We define a decreasing parameter as follows.
\begin{definition}[Decreasing Parameter]
\normalfont{A parameter is considered to decrease in size if it is the subject of a {\bf case} selector.} \hfill $\Box$
\end{definition}
A parameter that is the subject of a {\bf case} selector is deconstructed into smaller components and therefore decreases in size.
We define an increasing parameter as follows.
\begin{definition}[Increasing Parameter]
\normalfont{A parameter is considered to increase in size if any expression other than a variable is assigned to it.} \hfill $\Box$
\end{definition}
Note that this is a conservative criterion for an increase in size based on the syntactic size of the parameter rather than the semantic size. 
Thus, for example, in the call  {\em gcd (sub x y) y} , the first parameter would be considered to be increasing syntactically, even though it is actually decreasing semantically. 
However, such potentially increasing parameters are often transformed by distillation to reveal that they are in fact decreasing, as we have seen is the case for this example.
\begin{lemma}[On Decreasing Parameters]
\normalfont{Every parameter that has decreased in size cannot previously have increased in size.} \hfill $\Box$
\label{decreaselemma}
\end{lemma}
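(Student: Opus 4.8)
The plan is to prove the lemma by tracking, along each trace through the generated LTS (equivalently, along the corresponding traversal of the distilled form characterised in Theorem~\ref{distilled}), the set $\rho$ of let-bound variables currently in scope, and to argue by induction on the number of transitions taken to reach the current point. The grammar of Theorem~\ref{distilled} already supplies half of what is needed: its side condition on the {\bf case} production forces every selector head to satisfy $x \notin \rho$, so no parameter lying in $\rho$ can ever be the subject of a {\bf case}. The invariant I would maintain is the complementary fact that every parameter which has increased in size so far lies in $\rho$. Since a parameter increases only when a non-variable expression is assigned to it, and since in distilled form every function argument is a variable, the sole syntactic source of an increase is a binding $\mathbf{let}~x = e_0~\mathbf{in}~e_1$ with $e_0$ not a variable; by the grammar this binding passes to the body under $\rho \cup \{x\}$, so the invariant is re-established exactly where the increase occurs. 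Combining the two facts gives the lemma: a decreasing parameter is a {\bf case} head, hence not in $\rho$, hence cannot have increased earlier on the trace.

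First I would dispatch the productions that neither increase nor decrease a parameter --- the variable, constructor application, variable application, and $\lambda$-abstraction cases --- by invoking the induction hypothesis on each immediate subexpression under the same $\rho$. The decreasing step is the {\bf case} production $\mathbf{case}~(x~e_1 \ldots e_n)~\mathbf{of}~\ldots$: here the side condition $x \notin \rho$ is precisely what I need, and together with the invariant it shows that the deconstructed parameter $x$ has not previously been assigned a non-variable. The {\bf let} production is dual, being the only place where $\rho$ grows; I would simply check that its body is entered under $\rho \cup \{x\}$, so that the freshly increased $x$ is recorded before any later {\bf case} can mention it.

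The main obstacle is the function-call production $f~x_1 \ldots x_n$, where the trace crosses from a call site into the body of $f$ and the formal parameters are freshly bound. Argument passing is variable-to-variable and so does not itself trigger an increase, but I must still show that the increase status of an actual argument is inherited by the corresponding formal parameter --- that is, if $x_i \in \rho$ at the call site, then the matching formal of $f$ lies in the $\rho$ under which $f$'s body is traversed. This is exactly the content of the remark following Theorem~\ref{distilled}: a subexpression extracted by a {\bf let} is guaranteed never to be an intermediate data structure, so a parameter once bound to such a structure is never consumed by a {\bf case}, even after being threaded through one or more recursive calls. I would therefore argue that distillation threads $\rho$ consistently across function boundaries, so that a $\rho$-membership established at a {\bf let} survives every intervening function unfolding; combined with the $x \notin \rho$ condition on {\bf case}, this closes the induction. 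I expect the verification of this cross-call threading, rather than any of the local structural cases, to be the delicate part of the argument.
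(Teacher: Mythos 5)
Your proposal is correct and is essentially the paper's own argument: the lemma is read off directly from the $\rho$-discipline of the distilled-form grammar in Theorem~\ref{distilled}, where a parameter that increases (is assigned a non-variable via \textbf{let}) enters $\rho$, and the side condition $x \notin \rho$ on the \textbf{case} production then bars it from ever being a selector subject. The cross-call threading you flag as the delicate step is already built into the statement of Theorem~\ref{distilled} itself (the body of $f$ is described as $e^{\rho}$ under the same $\rho$), which is why the paper treats the lemma as immediate and gives the same argument in three sentences rather than via your explicit induction along traces.
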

\begin{proof}
This can be proved quite straightforwardly from the definition of distilled form in Theorem \ref{distilled}. Within the distilled form $e^{\rho}$, if any expression other than
a variable is assigned to a parameter using a {\bf let}, then the parameter is added to the set $\rho$ and cannot subsequently be the subject of a {\bf case} selector. Thus, if a parameter
has increased in size, it cannot subsequently decrease.
\end{proof} \\
In order to prove that a program terminates, we need to show that all possible {\em traces} through the labelled transition system generated from the result of distilling the program are {\em infinitely progressing}.
We now define what these terms mean.
\begin{definition}[Trace]
\normalfont{A {\em trace} within a labelled transition system $({\cal E},e_0,Act,\to)$ is a sequence of states $e_0$, $e_1$, $\ldots$ where $\forall i.\exists \alpha.e_i \overset{\alpha}{\longrightarrow} e_{i+1} \in \to$.} \hfill $\Box$
\end{definition}
\begin{definition}[Infinitely Progressing Trace]
\normalfont{An {\em infinitely progressing trace} is a trace that contains an infinite number of decreases in parameter size.} \hfill $\Box$
\end{definition}
\begin{theorem}[Termination]
\normalfont{If all traces through the labelled transition system generated from the result of distilling a program are infinitely progressing, then the program terminates.} \hfill $\Box$
\end{theorem}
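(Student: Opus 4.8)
The plan is to argue by contradiction: assume the hypothesis that every trace through the generated LTS is infinitely progressing, but suppose that the program nevertheless diverges, and derive an impossible infinite descent of a well-founded data value.

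First I would connect the operational semantics to the LTS. If the program does not terminate then, by the definition of $\Uparrow$, there is an infinite reduction sequence $e_0 \overset{r_1}{\leadsto} e_1 \overset{r_2}{\leadsto} \cdots$ from the program's initial expression. I would show that this sequence induces an infinite trace through the LTS: each reduction step is matched by following a corresponding action, where an $f$-reduction follows the $f$-edge of Definition \ref{ltstoprog} and, when the reached call is a renaming of a memoised one, the fold-back edge $e \overset{\sigma}{\to} e'$ closes the cycle. Because the folded LTS is finite whereas the reduction is infinite, the trace must traverse fold-back edges infinitely often and is therefore a genuine infinite trace in the sense of the Trace definition.

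By the hypothesis this infinite trace is infinitely progressing, so it contains infinitely many decreases in parameter size. Since the program has only finitely many parameters, and parameter identity is propagated along the renamings $\sigma$ carried by the fold-back edges, a pigeonhole argument produces a single parameter $P$ that is the subject of a $\mathbf{case}$ selector infinitely often along the trace. Here Lemma \ref{decreaselemma} is decisive: were $P$ ever to increase, at some step $k$, then by the lemma $P$ could not decrease at any later step, contradicting the existence of decreases of $P$ after $k$. Hence $P$ decreases infinitely often while never being reassigned a non-variable (larger) value.

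Finally I would close the argument using well-foundedness. Each decrease strips one constructor from the data value consumed at $P$, and the structural guarantee of distilled form from Theorem \ref{distilled} --- that a $\mathbf{let}$-extracted, $\rho$-tracked subexpression may never drive a $\mathbf{case}$ selector --- ensures that this value is not rebuilt larger between consecutive decreases. Infinitely many deconstructions of $P$ would then form an infinite strictly descending chain in the well-founded order on finite data values, which is impossible; this contradiction forces the program to terminate. I expect the principal obstacle to be the bookkeeping in the first two steps: making the reduction-to-trace correspondence precise, and in particular tracking the identity of one parameter $P$ faithfully across the variable renamings on the fold-back edges, so that the phrase ``the same parameter decreases infinitely often'' is genuinely well defined. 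Once that tracking is established, the lemma and the well-foundedness of the data cleanly complete the proof.
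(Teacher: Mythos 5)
Your proposal is correct and takes essentially the same route as the paper's own proof: an infinitely progressing trace contains infinitely many parameter-size decreases, Lemma \ref{decreaselemma} (a parameter that decreases cannot previously have increased, by the structure of distilled form) rules out compensating growth, and the resulting infinite descent of well-founded data values is impossible, so the program terminates. Your additional bookkeeping --- the reduction-to-trace correspondence and the pigeonhole argument isolating a single parameter across renamings --- simply makes explicit steps that the paper's three-sentence proof leaves implicit.
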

\begin{proof}
From Lemma \ref{decreaselemma}, every parameter that has decreased in size cannot previously have increased in size. If the trace is infinitely progressing,
then there must be an infinite number of decreases in parameter size. As these parameters cannot have increased in size elsewhere within the trace,
there must be infinite descent.
\end{proof} \\
Since a decreasing parameter must be the subject of a {\bf case} selector, to show that a program terminates it is sufficient to show that in the labelled transition system 
generated from the result of distilling the program there is a {\bf case} expression between every renamed state and its renaming.
\begin{example}
\normalfont{In the LTS generated from the distilled program in Figure \ref{exampletrans} is shown in Figure \ref{examplelts}, 
we can see that there is a {\bf case} expression between every renamed state and its renaming, so this program is indeed terminating.
Proving the termination of the original program is tricky as there is no clear continued decrease in the size of either of the parameters of the {\em gcd} function 
(even though a number is subtracted from one of the arguments in each recursive call, it is difficult to determine that the number subtracted must be non-zero).}
\end{example}

\section{Examples}

We now give some examples of programs that cause difficulties in termination analysis using other techniques, but can be shown to terminate using the technique described here.
None of these examples can be proven to terminate using the size-change principle described in \cite{LEE01}.
\ignore{
We concentrate on examples that cannot be proven to terminate using the size-change principle described in \cite{LEE01}, which is surprisingly effective and can even be used
to prove that the Ackermann function shown in Figure \ref{Ackermann} terminates despite the fact that it is not primitive recursive.
\begin{figure}[htb]
\begin{center}
$\expr{\where{\app{\app{\var{ack}}{\var{m}}}{\var{n}}}{\fundef{\app{\app{\var{ack}}{\var{m}}}{\var{n}}}{\cas{\var{m}}{\var{Zero}}{\app{\var{Succ}}{\var{n}}}{\app{\var{Succ}}{\var{m'}}}{\cas{\var{n}}{\var{Zero}}{\app{\app{\var{ack}}{\var{m'}}}{\brackets{\app{\var{Succ}}{\brackets{\var{Zero}}}}}}{\app{\var{Succ}}{\var{n'}}}{\app{\app{\var{ack}}{\var{m'}}}{\brackets{\app{\app{\var{ack}}{\var{m}}}{\var{n'}}}}}}} \\
}}$
\end{center}
\caption{Ackermann Function}
\label{ackermann}
\end{figure} 
}
\begin{example}
\label{example1}
\normalfont{Consider the following program:}
\begin{center}
$\expr{\where{\app{\var{f}}{\var{n}}}{\fundef{\app{\var{f}}{\var{n}}}{\cas{\var{n}}{\var{Zero}}{\var{Zero}}{\app{\var{Succ}}{\var{n'}}}{\app{\var{g}}{\brackets{\app{\var{Succ}}{\var{n}}}}}} \\
\fundef{\app{\var{g}}{\var{n}}}{\cas{\var{n}}{\var{Zero}}{\var{Zero}}{\app{\var{Succ}}{\var{n'}}}{\cas{\var{n'}}{\var{Zero}}{\var{Zero}}{\app{\var{Succ}}{\var{n''}}}{\app{\var{f}}{\var{n''}}}}} \\
}}$
\end{center}
This has mutually recursive functions $f$ and $g$, where the parameter is increasing in the call from $f$ to $g$, and decreasing in the call from $g$ to $f$
and therefore causes difficulties for other termination checkers.
The result of transforming this program using distillation is as follows:
\begin{center}
$\expr{\where{\app{\var{f}}{\var{n}}}{\fundef{\app{\var{f}}{\var{n}}}{\cas{\var{n}}{\var{Zero}}{\var{Zero}}{\app{\var{Succ}}{\var{n'}}}{\app{\var{f}}{\var{n'}}}} \\
}}$
\end{center}
The LTS generated for this transformed program is shown in Figure \ref{example1lts}. We can now quite easily see that this program terminates as there is a {\bf case} expression between the function call $f~n$ and its renaming $f~n'$.
\begin{figure}[htbp]
$$\includegraphics[scale=0.6,trim = -150 300 250 50]{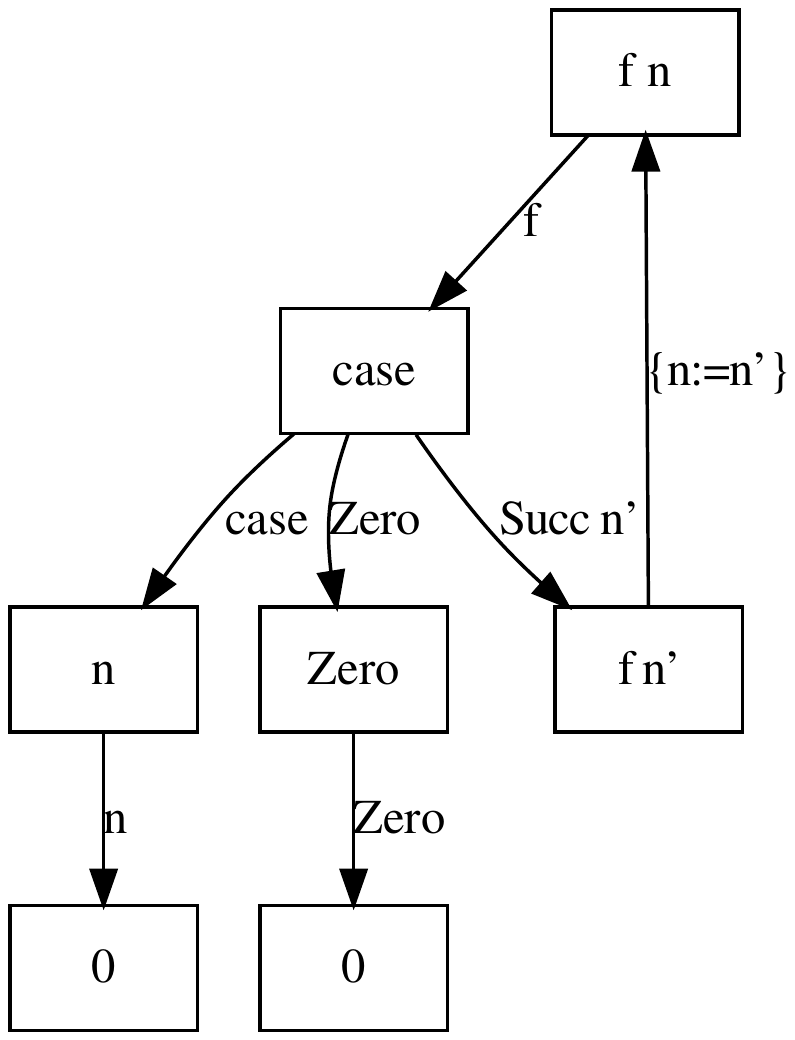}$$
\caption{LTS Representation of Program in Example \ref{example1}}
\label{example1lts}
\end{figure}
\end{example}
\begin{example}
\label{example2}
\normalfont{Consider the following program:}
\begin{center}
$\expr{\where{\app{\app{\var{f}}{\var{m}}}{\var{n}}}{\fundef{\app{\app{\var{f}}{\var{m}}}{\var{n}}}{\cas{\var{m}}{\var{Zero}}{\var{Zero}}{\app{\var{Succ}}{\var{m'}}}{\app{\app{\var{f}}{\brackets{\app{\app{\var{sub}}{\var{m}}}{\var{n}}}}}{\brackets{\app{\var{Succ}}{\var{n}}}}}} \\
\fundef{\app{\app{\var{sub}}{\var{x}}}{\var{y}}}{\cas{\var{y}}{\var{Zero}}{\var{x}}{\app{\var{Succ}}{\var{y}}}{\cas{\var{x}}{\var{Zero}}{\var{Zero}}{\app{\var{Succ}}{\var{x}}}{\app{\app{\var{sub}}{\var{x}}}{\var{y}}}}} \\
}}$
\end{center}
This causes problems for other termination checkers as the size of the second parameter is increasing and the size of the first parameter will not decrease if the value of the second parameter is {\em Zero}.
The result of transforming this program using distillation is as follows:
\begin{center}
$\expr{\where{\app{\app{\var{f}}{\var{m}}}{\var{n}}}{\fundef{\app{\app{\var{f}}{\var{m}}}{\var{n}}}{\cas{\var{m}}{\var{Zero}}{\var{Zero}}{\app{\var{Succ}}{\var{m'}}}{\cas{\var{n}}{\var{Zero}}{\app{\var{g}}{\var{m'}}}{\app{\var{Succ}}{\var{n'}}}{\app{\app{\var{f}}{\var{m'}}}{\var{n'}}}}} \\
\fundef{\app{\var{g}}{\var{m}}}{\cas{\var{m}}{\var{Zero}}{\var{Zero}}{\app{\var{Succ}}{\var{m'}}}{\app{\var{g}}{\var{m'}}}} \\
}}$
\end{center}
The LTS generated for this transformed program is shown in Figure \ref{example2lts}.
We can see that there is a case expression between every renamed state and its renaming, so this program is indeed terminating.
\begin{figure}[htbp]
$$\includegraphics[scale=0.6,trim = -150 150 250 50]{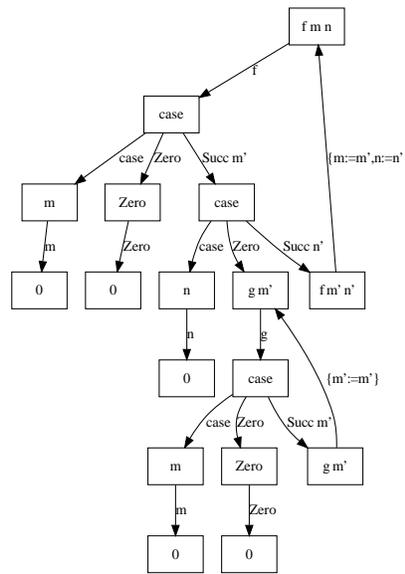}$$
\caption{LTS Representation of Program in Example \ref{example1}}
\label{example2lts}
\end{figure} 
\end{example}

\begin{example}
\label{example3}
\normalfont{Consider the following program:}
\begin{center}
$\expr{\where{\app{\app{\var{f}}{\var{m}}}{\var{n}}}{\fundef{\app{\app{\var{f}}{\var{m}}}{\var{n}}}{\cas{\var{m}}{\var{Zero}}{\var{Zero}}{\app{\var{Succ}}{\var{m'}}}{\cas{\var{n}}{\var{Zero}}{\app{\app{\var{f}}{\var{m'}}}{\var{n}}}{\app{\var{Succ}}{\var{n'}}}{\cas{\brackets{\app{\app{\var{gt}}{\var{m}}}{\var{n}}}}{\var{True}}{\app{\app{\var{f}}{\var{m'}}}{\var{n}}}{\var{False}}{\app{\app{\var{f}}{\brackets{\app{\var{Succ}}{\var{m}}}}}{\var{n'}}}}}} \\
\fundef{\app{\app{\var{gt}}{\var{x}}}{\var{y}}}{\cas{\var{x}}{\var{Zero}}{\var{False}}{\app{\var{Succ}}{\var{x}}}{\cas{\var{y}}{\var{Zero}}{\var{True}}{\app{\var{Succ}}{\var{y}}}{\app{\app{\var{gt}}{\var{x}}}{\var{y}}}}} \\
}}$
\end{center}
In the function $f$, the first parameter both increases and decreases, so this causes problems for other termination checkers.
The result of transforming this program using distillation is as follows:
\begin{center}
$\expr{\where{\app{\app{\var{f}}{\var{m}}}{\var{n}}}{\fundef{\app{\app{\var{f}}{\var{m}}}{\var{n}}}{\cas{\var{m}}{\var{Zero}}{\var{Zero}}{\app{\var{Succ}}{\var{m'}}}{\cas{\var{n}}{\var{Zero}}{\app{\var{g}}{\var{m'}}}{\app{\var{Succ}}{\var{n'}}}{\app{\app{\var{f}}{\var{m'}}}{\var{n'}}}}} \\
\fundef{\app{\var{g}}{\var{m}}}{\cas{\var{m}}{\var{Zero}}{\var{Zero}}{\app{\var{Succ}}{\var{m'}}}{\app{\var{g}}{\var{m'}}}} \\
}}$
\end{center}
The LTS generated for this transformed program is shown in Figure \ref{example3lts}.
We can see that there is a case expression between every renamed state and its renaming, so this program is indeed terminating.
\begin{figure}[htbp]
$$\includegraphics[scale=0.5,trim = -150 150 250 50]{LTS4.pdf}$$
\caption{LTS Representation of Program in Example \ref{example1}}
\label{example3lts}
\end{figure} 
\end{example}

\begin{example}
\label{example4}
\normalfont{The final example shown in Figure \ref{mccarthy} is McCarthy's 91 function, which is nested recursive and has often been used as a test case for proving termination.}
\begin{figure}[htb]
\begin{center}
$\expr{\where{\app{\var{f}}{\var{n}}}{\fundef{\app{\var{f}}{\var{n}}}{\cas{\brackets{\app{\app{\var{gt}}{\var{n}}}{\brackets{100}}}}{\var{True}}{\app{\app{\var{sub}}{\var{n}}}{10}}{\var{False}}{\app{\var{f}}{\brackets{\app{\var{f}}{\brackets{\app{\app{\var{plus}}{\var{n}}}{\brackets{11}}}}}}}} \\
\fundef{\app{\app{\var{gt}}{\var{x}}}{\var{y}}}{\cas{\var{x}}{\var{Zero}}{\var{False}}{\app{\var{Succ}}{\var{x}}}{\cas{\var{y}}{\var{Zero}}{\var{True}}{\app{\var{Succ}}{\var{y}}}{\app{\app{\var{gt}}{\var{x}}}{\var{y}}}}} \\
\fundef{\app{\app{\var{sub}}{\var{x}}}{\var{y}}}{\cas{\var{y}}{\var{Zero}}{\var{x}}{\app{\var{Succ}}{\var{y}}}{\cas{\var{x}}{\var{Zero}}{\var{Zero}}{\app{\var{Succ}}{\var{x}}}{\app{\app{\var{sub}}{\var{x}}}{\var{y}}}}} \\
\fundef{\app{\app{\var{plus}}{\var{x}}}{\var{y}}}{\cas{\var{x}}{\var{Zero}}{\var{y}}{\app{\var{Succ}}{\var{x}}}{\app{\var{Succ}}{\brackets{\app{\app{\var{plus}}{\var{x}}}{\var{y}}}}}} \\
}}$
\end{center}
\caption{McCarthy's 91 Function}
\label{mccarthy}
\end{figure}
Although the result of transforming this program using distillation (and the corresponding LTS) are too large to show here, we are also able to prove the termination of this program.
\end{example}

\ignore{
Unfortunately, the results of distilling these programs and their corresponding labelled transition systems are too large to be included in the paper.

The first program shown in Figure \ref{ackermann} is the well-known {\em Ackermann} function, which is not primitive recursive, making it difficult to prove termination.
\begin{figure}[htb]
\begin{center}
$\expr{\where{\app{\app{\var{ack}}{\var{m}}}{\var{n}}}{\fundef{\app{\app{\var{ack}}{\var{m}}}{\var{n}}}{\cas{\var{m}}{\var{Zero}}{\app{\var{Succ}}{\var{n}}}{\app{\var{Succ}}{\var{m'}}}{\cas{\var{n}}{\var{Zero}}{\app{\app{\var{ack}}{\var{m'}}}{\brackets{\app{\var{Succ}}{\brackets{\var{Zero}}}}}}{\app{\var{Succ}}{\var{n'}}}{\app{\app{\var{ack}}{\var{m'}}}{\brackets{\app{\app{\var{ack}}{\var{m}}}{\var{n'}}}}}}} \\
}}$
\end{center}
\caption{Ackermann Function}
\label{ackermann}
\end{figure} \\
The second program shown in Figure \ref{mcv} is an implementation of the {\em monotone circuit value problem}, which is well-known in complexity theory, and is also difficult to prove termination for.
\begin{figure}[htb]
\begin{center}
$\expr{\where{\app{\var{mcv}}{\var{p}}}{\fundef{\app{\var{mcv}}{\var{p}}}{\cas{\var{p}}{\var{Empty}}{\var{False}}{\app{\app{\app{\var{Prog}}{\var{v}}}{\var{e}}}{\var{p'}}}{\app{\app{\var{run}}{\var{v}}}{\var{p}}}} \\
\fundef{\app{\app{\var{run}}{\var{v}}}{\var{p}}}{\cas{\var{p}}{\var{Empty}}{\var{False}}{\app{\app{\app{\var{Prog}}{\var{v}}}{\var{e}}}{\var{p'}}}{\app{\app{\var{eval}}{\var{e}}}{\var{p'}}}} \\
\fundef{\app{\app{\var{eval}}{\var{e}}}{\var{p}}}{\longcas{\var{e}}{\var{True}}{\var{False}}{\cas{\brackets{\app{\app{\var{eval}}{\var{e}}}{\var{p}}}}{\var{True}}{\app{\app{\var{eval}}{\var{e'}}}{\var{p}}}{\var{False}}{\var{False}}}{\cas{\brackets{\app{\app{\var{eval}}{\var{e}}}{\var{p}}}}{\var{True}}{\var{True}}{\var{False}}{\app{\app{\var{eval}}{\var{e'}}}{\var{p}}}}{\cas{\var{p}}{\var{Empty}}{\var{False}}{\app{\app{\app{\var{Prog}}{\var{v'}}}{\var{e'}}}{\var{p'}}}{\cas{\brackets{\app{\app{\var{eqNat}}{\var{v}}}{\var{v'}}}}{\var{True}}{\app{\app{\var{eval}}{\var{e'}}}{\var{p'}}}{\var{False}}{\app{\app{\var{eval}}{\var{e}}}{\var{p'}}}}}}
}}$
\end{center}
\caption{Monotone Circuit Value Function}
\label{mcv}
\end{figure}
The final program shown in Figure \ref{mccarthy} is McCarthy's 91 function, which is nested recursive and has often been used as a test case for proving termination.
\begin{figure}[htb]
\begin{center}
$\expr{\where{\app{\var{f}}{\var{n}}}{\fundef{\app{\var{f}}{\var{n}}}{\cas{\brackets{\app{\app{\var{gt}}{\var{n}}}{\brackets{100}}}}{\var{True}}{\app{\app{\var{sub}}{\var{n}}}{10}}{\var{False}}{\app{\var{f}}{\brackets{\app{\var{f}}{\brackets{\app{\app{\var{plus}}{\var{n}}}{\brackets{11}}}}}}}} \\
\fundef{\app{\app{\var{gt}}{\var{x}}}{\var{y}}}{\cas{\var{x}}{\var{Zero}}{\var{False}}{\app{\var{Succ}}{\var{x}}}{\cas{\var{y}}{\var{Zero}}{\var{True}}{\app{\var{Succ}}{\var{y}}}{\app{\app{\var{gt}}{\var{x}}}{\var{y}}}}} \\
\fundef{\app{\app{\var{sub}}{\var{x}}}{\var{y}}}{\cas{\var{y}}{\var{Zero}}{\var{x}}{\app{\var{Succ}}{\var{y}}}{\cas{\var{x}}{\var{Zero}}{0}{\app{\var{Succ}}{\var{x}}}{\app{\app{\var{sub}}{\var{x}}}{\var{y}}}}} \\
\fundef{\app{\app{\var{plus}}{\var{x}}}{\var{y}}}{\cas{\var{x}}{\var{Zero}}{\var{y}}{\app{\var{Succ}}{\var{x}}}{\app{\var{Succ}}{\brackets{\app{\app{\var{plus}}{\var{x}}}{\var{y}}}}}} \\
}}$
\end{center}
\caption{McCarthy's 91 Function}
\label{mccarthy}
\end{figure}
We are able to prove termination for all of these difficult test cases.
} 
\section{Conclusion and Related Work}

In this paper, we have described a new approach to the termination analysis of functional programs that is applied to the output of the {\em distillation} 
program transformation \cite{HAMILTON07A,HAMILTON12}. Distillation converts programs into a simplified form called {\em distilled form},
and to prove that programs in this form terminate, we convert them into a corresponding {\em labelled transition system} and then show 
that all possible infinite {\em traces} through the labelled transition system would result in an infinite descent of well-founded data values. 

We argue that our termination analysis is simple and straightforward. We do not need to treat nested function calls, mutual recursion 
or permuted arguments as special cases, we do not need to search for appropriate ranking functions and we do not need to define a 
size ordering on values. Most recent approaches to proving termination have involved searching for a set of possible ranking functions 
and using a disjunctive termination argument \cite{BERDINE06,COOK06,MANOLIOS06,COOK11}. We avoid the need for such involved analysis here.

The most closely related work to that described here is that described in \cite{BROTHERSTON08} which makes use of {\em cyclic proof} techniques. 
In \cite{BROTHERSTON08}, a {\em cyclic pre-proof} form is defined that is a finite derivation tree in which every leaf that is not the conclusion of 
an axiom is closed by a backlink to a syntactically identical interior node. A global soundness condition is defined on pre-proofs so they can be verified 
as genuine {\em cyclic proofs}. This involves proving that every trace through the pre-proof is infinitely progressing and that there must therefore be 
infinite descent of the data values, in much the same way as is done in the work described here. The structure of a pre-proof is similar to the form of the 
labelled transition systems generated from programs that are in distilled form. However, a pre-proof does not contain any instances of the cut rule (which
correspond to {\bf let}s in distilled form) and therefore cannot have any accumulating parameters, so this technique is not applicable to as wide a range 
of programs as the technique described here. Also, because our programs have first been transformed into the form required to facilitate our proof, 
we are able to prove termination for an even wider class of programs.

Another closely related work to that described in this paper is the work on the {\em size-change principle} for termination \cite{LEE01}. 
In \cite{LEE01}, {\em size-change graphs} are created that indicate definite information about the change of size of parameters in function calls.
These graphs indicate whether a parameter is either decreasing or non-increasing. To prove termination of a program, it is then
necessary to show that every possible thread within a program is {\em infinitely descending}, meaning that it contains infinitely
many occurrences of a decreasing parameter. This is similar to the approach taken in this paper, where we also try to show that
there are infinitely many occurrences of a decreasing parameter. Both techniques can handle nested function calls. In \cite{LEE01},
these are handled directly and in this work they are transformed to remove this nesting prior to analysis. However, there are also a 
number of differences between these two techniques. Firstly, in \cite{LEE01}, if a parameter can possibly increase at any point in a 
thread, then it is not possible to determine whether it is infinitely descending. In this work, we can ignore this possibility as any parameter 
that decreases in size cannot previously have increased. However, in \cite{LEE01}, a more precise measure of parameter size is employed 
based on their semantic value with a well-founded partial ordering. In this work, a more conservative measure of the syntactic size of 
parameters is used, where a parameter that is assigned to any expression other than a variable is considered to increase in size. 
In our running example, in the call {\em gcd (sub x y) y}, it is difficult to determine that the first parameter is semantically decreasing as
we need to show that the number being subtracted is non-zero, so we cannot prove that it terminates using the size-change principle. 
Using our technique, even though we initially assume that this parameter is increasing as the syntactic size is increasing, 
the program is transformed by distillation to reveal that it is in fact decreasing, so we are able to prove termination. Also, our technique
is directly applicable to higher-order languages, while the size-change principle originally described in \cite{LEE01} is not.
An extension of the size-change principle to higher-order languages is described in \cite{SERENI05}, but it is far from straightforward.
We have not been able to find any examples of programs that can be found to terminate using this size-change principle
and cannot be found to terminate using the technique described here, but we have found many examples of the opposite being true. 
For example, none of the example programs in this paper can be shown to terminate using the size-change principle. 

\ignore{ 
Another approach to termination analysis that works by first transforming a program into a simplified form and then performing the
analysis is described in \cite{DOMENECH19}. This works by first performing a {\em control-flow refinement} by partial evaluation
 to make implicit control-flow explicit. Proving termination for each strongly connected component of the resulting system then involves 
 inferring a ranking function for each node. This approach is not as fully automated as the approach described in this paper, as the
 appropriate ranking functions and some invariants may have to be provided to facilitate the analysis. Further comparison is required to 
 see whether this approach can prove termination for programs that we cannot prove using our own technique.
 
None of the described techniques for termination analysis (including our own) are able to determine whether or not the program shown in Figure \ref{collatz} terminates.
\begin{figure}[htb]
\begin{center}
$\expr{\where{\app{\var{f}}{\var{n}}}{\fundef{\app{\var{f}}{\var{n}}}{\cas{\var{n}}{\var{Zero}}{\var{False}}{\app{\var{Succ}}{\var{n'}}}{\cas{\var{n'}}{\var{Zero}}{\var{True}}{\app{\var{Succ}}{\var{n''}}}{\cas{\brackets{\app{\var{even}}{\var{n}}}}{\var{True}}{\app{\var{f}}{\brackets{\app{\var{half}}{\var{n}}}}}{\var{False}}{\app{\var{f}}{\brackets{\app{\var{Succ}}{\brackets{\app{\var{triple}}{\var{n}}}}}}}}}} \\
\fundef{\app{\var{even}}{\var{n}}}{\cas{\var{n}}{\var{Zero}}{\var{True}}{\app{\var{Succ}}{\var{n}}}{\cas{\var{n}}{\var{Zero}}{\var{False}}{\app{\var{Succ}}{\var{n}}}{\app{\var{even}}{\var{n}}}}} \\
\fundef{\app{\var{half}}{\var{n}}}{\cas{\var{n}}{\var{Zero}}{\var{Zero}}{\app{\var{Succ}}{\var{n}}}{\cas{\var{n}}{\var{Zero}}{\var{Zero}}{\app{\var{Succ}}{\var{n}}}{\app{\var{Succ}}{\brackets{\app{\var{half}}{\var{n}}}}}}} \\
\fundef{\app{\var{triple}}{\var{n}}}{\cas{\var{n}}{\var{Zero}}{\var{Zero}}{\app{\var{Succ}}{\var{n}}}{\app{\var{Succ}}{\brackets{\app{\var{Succ}}{\brackets{\app{\var{Succ}}{\brackets{\app{\var{triple}}{\var{n}}}}}}}}}} \\
}}$
\end{center}
\caption{Collatz Function}
\label{collatz}
\end{figure}
This is the holy grail in program termination analysis; proving whether or not this program terminates will show whether or not the Collatz conjecture is true.
Work is continuing in this area.
}

\ignore{
\section*{Acknowledgements}

This work owes a lot to the input of Neil Jones, who provided many useful insights and ideas on the subject matter presented here.}

\bibliographystyle{eptcs}

\bibliography{mybib}

\end{document}